\documentclass[letter,journal,twocolumn,twoside,times,10pt]{IEEEtran}
\usepackage{epsfig,amsfonts,subfigure,color}
\usepackage[nolist]{acronym}
\usepackage{graphicx,cite,amssymb,amsmath,mathrsfs,perso,stfloats}
\usepackage[usenames,dvipsnames]{xcolor}
\usepackage{bbm}

\bibliographystyle{IEEEtran}
\setlength{\arraycolsep}{0.07 cm}

\newtheorem{prop}{Proposition}
\newtheorem{lemma}{Lemma}

\newcommand{\mc}{\mathcal}
\newcommand{\ve}[1]{\underline{#1}}

\begin{document}

\begin{acronym}%[nolist] % usage: \ac{SW}, \acp{SW} for plurals \acf{SW} Use the full name of the
% acronym. \acs{SW}Use the acronym, even before the first corresponding \ac command
% \acl{acronym}Expand the acronym without using the acronym itself.
\acro{AWGN}{additive white Gaussian noise} \acro{EXIT}{extrinsic
information transfer} \acro{RA}{random access}
\acro{CCw/oFB}{collision channel without
feedback}\acro{SIC}{successive interference
cancelation}\acro{SN}{sum node}\acro{PN}{packet node}\acro{CSA}{coded
slotted ALOHA}\acro{MAC}{medium access control}\acro{CRA}{coded
random access}\acro{SPC}{single parity-check}\acro{MDS}{maximum
distance separable}\acro{RS}{Reed-Solomon}\acro{SNR}{signal-to-noise
ratio}\acro{D-GLDPC}{doubly-generalized LDPC}\acro{DE}{differential
evolution}\acro{PLR}{packet loss rate}\acro{LDPC}{low-density
parity-check}\acro{DVB}{Digital Video Broadcasting}\acro{RCS}{Return
Channel via Satellite}\acro{BEC}{binary erasure
channel}\acro{MAP}{maximum-a-posteriori}\acro{GJE}{Gauss-Jordan
elimination}\acro{BP}{belief
propagation}\acro{IT}{iterative}\acro{CRDSA}{contention resolution
diversity slotted Aloha}\acro{IRSA}{irregular repetition slotted
Aloha}\acro{GA-MAP}{genie-aided maximum-a-posteriori}\acro{i.i.d.}{independent and identically distributed}\acro{r.v.}{random variable}\acro{SA}{slotted Aloha}\acro{DSA}{diversity slotted Aloha}\acro{DN}{diversity node}\acro{p.m.f.}{probability mass function}
\end{acronym}

\title{The Throughput of Slotted Aloha with Diversity}
\author{Andrea Munari, Michael Heindlmaier, Gianluigi Liva and Matteo Berioli
\thanks{
Andrea Munari, Gianluigi Liva and Matteo Berioli are with Institute of Communication and Navigation of the Deutsches Zentrum f\"{u}r Luft-
und Raumfahrt (DLR), D-82234 Wessling, Germany (e-mail: \{andrea.munari,gianluigi.liva,matteo.berioli\}@dlr.de).}
\thanks{
Michael Heindlmaier is with the Lehrstuhl f\"{u}r Nachrichtentechnik, Technische Universit\"{a}t M\"{u}nchen (TUM), D-80290 M¨unchen Germany (e-mail: michael.heindlmaier@tum.de).
}
}%
\maketitle
%%%%%%%%%%%%%%%%%%%%%%%%%%%%%%%%%%%%%%%%%%%%%%%%%%%%%%%%%%%%%%%%%%%%%%%%%%%%%%%%%%%%%%%%%%%%%%%%%%%%
%%%%%
\thispagestyle{empty} \pagestyle{empty}

%%%%%%%%%%%%%%%%%%%%%%%%%%%%%%%%%%%%%%%%%%%%%%%%%%%%%%%%%%%%%%%%%%%%%%%%%%%%%%%%%%%%%%%%%%%%%%%%%%%%
%%%%%%%%%%%%%%%%%%%%%%%%%
%-----------------------------------------------------
%---------------------- ABSTRACT ---------------------
%-----------------------------------------------------
%
\begin{abstract}
In this paper, a simple variation of classical Slotted Aloha is introduced and analyzed. The enhancement relies on adding multiple receivers that gather different observations of the packets transmitted by a user population in one slot. For each observation, the packets transmitted in one slot are assumed to be subject to independent on-off fading, so that each of them is either completely faded, and then does not bring any power or interference at the receiver, or it arrives unfaded, and then may or may not, collide with other unfaded transmissions.  With this model, a novel type of diversity is introduced to the conventional SA scheme, leading to relevant throughput gains already for moderate number of receivers. The analytical framework that we introduce allows to derive closed-form expression of both throughput and packet loss rate an arbitrary number of receivers, providing interesting hints on the key trade-offs that characterize the system. We then focus on the problem of having receivers forward the full set of collected packets to a final gateway using the minimum possible amount of resources, i.e., avoiding delivery of duplicate packets, without allowing any exchange of information among them. We derive what is the minimum amount of resources needed and propose a scheme based on random linear network coding that achieves asymptotically this bound without the need for the receivers to coordinate among them.
\end{abstract}
%\begin{keywords} LDPC codes, convolutional LDPC codes, spatial
%coupling, random access, collision channel.
%\end{keywords}

\section{Introduction}\label{sec:intro}

A renewed interest for Aloha-like \ac{RA} protocols led recently to the development of new high-throughput uncoordinated multiple-access schemes \cite{DeGaudenzi07:CRDSA,Giannakis07:SICTA,Gollakota2008:ZigZag,Liva11:IRSA,Paolini11:CSA_ICC,Kissling11:CRA,Dimakis11:SigSag,Liva2012:ISIT_ConvIRSA,Popovski2012:Aloha,Pfister2012:IRSA}. These schemes share the feature of cancelling the interference caused by
 a packet whenever (a portion of) it is successfully decoded. Among the aforementioned works, a specific class is  based on the \ac{DSA} protocol introduced in \cite{Rappaport83:DSA} enhanced by \ac{SIC}. In
 \cite{Liva11:IRSA,Paolini11:CSA_ICC} it was shown that the \ac{SIC}
 process can be well modeled by means of a bipartite graph.
 By exploiting the graph model, a
 remarkably-high capacity (e.g., up to $0.8\,\mathrm{[packets/slot]}$)
 can be achieved in practical implementations, whereas for large \ac{MAC} frames it was demonstrated that fully efficiency ($1\,\mathrm{[packets/slot]}$) can be substantially attained \cite{Paolini11:CSA_Globecom,Liva2012:ISIT_ConvIRSA,Pfister2012:IRSA}.
A further key ingredient to attain large throughput gains deals with the exploitation of \emph{diversity}. As an example, the approaches proposed in  \cite{Rappaport83:DSA,DeGaudenzi07:CRDSA,Liva11:IRSA,Paolini11:CSA_ICC,Kissling11:CRA,Popovski2012:Aloha,Pfister2012:IRSA} take advantage of time diversity to resolve collisions.

 In this paper, we develop and analyze a simple yet powerful \emph{relay-aided} \ac{SA} scheme which enjoys space diversity.
 More specifically,  $K$ independent observations of a slot are supposed to be available. The different observations are associated to $K$ relays, and, for each of them, the transmitted packets are subject to independent fading coefficients. Collisions are regarded as destructive, and the system is complemented by having relays deliver what they have decoded to a centralized gateway.
 %Although simple, this model allows deriving an exact expression of the throughput as a function of the channel load, the fade probability and the diversity order $K$. Under this simplified model, we well illustrate how the joint effect of fading and diversity can be turned into large throughput gains. Closed form expressions are derived for the achievable throughput.

\ac{SA} with space (antenna) diversity was analyzed in \cite{Zorzi98:ALOHA} under the assumption of Rayleigh fading and shadowing, with emphasis on the two-antenna case. With respect to \cite{Zorzi98:ALOHA}, we introduce in our analysis a simplified channel model. In particular, the uplink wireless link connecting user $i$ and relay $j$ is described by a packet erasure channel with erasure probability $\varepsilon_{i,j}$, following the on-off fading model \cite{OnOff2003}. The fading is assumed to be independent for each pair of user-relay pair. Despite its simplicity, the model is accurate enough for some cases of interest. As an example, it captures the main features of an interactive satellite network with satellite located on different orbits, and where the line-of-sight link between users and relays may be blocked whenever an obstacle lies between a user and a satellite (here, the satellites play the role of relays).

Under this fading model, elegant exact expressions for the system throughput as a function of the number of relays are derived, yielding  deep insights in the gains provided by diversity in \ac{SA} protocols. We further provide an analysis on how the link between the relays and the centralized gateways (also referred to as \emph{downlink}) shall be dimensioned, assuming the relays to be uncoordinated. A bound on the downlink capacity is derived, which is achieved by a random linear coding approach based on Slepian-Wolf coding.

The rest of the paper is organized as follows. We start in Section~\ref{sec:systemModel} by defining the system model that is used to develop our framework. Section~\ref{sec:uplink} provides a thorough analysis of the system uplink, characterizing it in terms of throughput and delivery reliability, whereas in Section~\ref{sec:finiteDLCapacity} we study how to effectively deliver collected packets to a common gateway without resorting to coordination and information exchange among relays. In Appendix, we also investigate, for the two-receiver case, an extension of the considered scheme that takes advantage of successive interference cancellation techniques.

\section{System Model and Preliminaries} \label{sec:systemModel}

Throughout this paper, we focus on the topology depicted in Fig.~\ref{fig:simple_topology}, where an infinite population of users
want to deliver information in the form of data packets to a collecting gateway (GW). The transmission process is divided in two phases, referred to as \emph{uplink} and \emph{downlink}, respectively. During the former, data are sent in an uncoordinated fashion over a shared wireless channel to a set of $K$ receivers or relays, which, in turn, forward collected information to the GW in the downlink.

As to the uplink, time is divided in successive slots, and transmission parameters in terms of packet length, coding and modulation are fixed such that one packet can be sent within one time unit. Users are assumed to be slot-synchronized, and Slotted Aloha (SA) \cite{Abramson:ALOHA} is employed as medium access policy. Furthermore, the number of users accessing the channel in a generic slot is modelled as a Poisson-distributed r.v. $U$ of intensity $\rho$, with:
\begin{equation}
\textrm{Pr}\{ U = u \} = \frac{\rho^u e^{-\rho}}{u!}\,.
\end{equation}

\begin{figure}
\centering
\includegraphics[width=0.8\columnwidth]{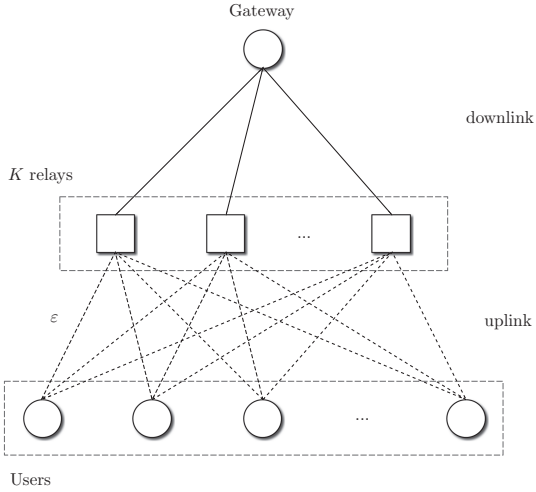}
\caption{Reference topology for the system under consideration.}
\label{fig:simple_topology}
\end{figure}

The uplink wireless link connecting user $i$ and receiver $j$ is described by a packet erasure channel with erasure probability $\varepsilon_{i,j}$, where independent realizations for any $(i,j)$ pair as well as for a specific user-receiver couple across time slots are assumed. For the sake of mathematical tractability, we set $\varepsilon_{i,j} = \varepsilon, \: \forall \,\, i,\, j$. Following the on-off fading description \cite{OnOff2003}, we assume that a packet is either completely shadowed, not bringing any power or interference contribution at a receiver, or it arrives unfaded. While, on the one hand, such a model is especially useful to develop mathematically tractable approaches to the aim of highlighting the key tradeoffs of the considered scenario, it also effectively captures effects like fading and short-term receiver unavailability due, for instance, to the presence of obstacles.
Throughout our investigation, no multi-user detection capabilities are considered at the relays, so that collisions among non-erased data units are regarded as destructive and prevent decoding at a receiver.

Within this framework, the number of non-erased packets that arrive at a relay when $u$ concurrent transmissions take place follows a binomial distribution of parameters $(u,1-\varepsilon)$ over one slot. Therefore, a successful reception occurs with probability $u(1-\varepsilon)\varepsilon^{u-1}$, and the average throughput experienced at each of the $K$ receivers, in terms of decoded packets per slot, can be computed as:
\begin{align}
\mathcal T_{sa} &= \sum\limits_{u=0}^{\infty} \frac{\rho^u e^{-\rho}}{u!} \, u(1-\varepsilon)\varepsilon^{u-1} = \rho (1-\varepsilon) e^{-\rho(1-\varepsilon)}\,,
\end{align}
corresponding to the performance of a SA system with erasures.
On the other hand, a spatial diversity gain can be triggered when the  relays are considered jointly, since independent channel realizations may lead them to retrieve different information units over the same time slot. In order to quantify this beneficial effect, we label a packet as \emph{collected} when it has been received by at least one of the relays, and we introduce the \emph{uplink throughput} $\mathcal T_{up,K}$ as the average number of collected packets per slot. Despite its simplicity, such a definition offers an effective characterization of the beneficial effects of diversity, by properly accounting for both the possibility of retrieving up to $\min\{u,K\}$ distinct data units or multiple times the same data unit over a slot, as will be discussed in details in Section~\ref{sec:uplinkThroughput}. On the other hand, $\mathcal T_{up,K}$ also quantifies the actual amount of information that can be retrieved by the set of receivers, providing an upper bound for the overall achievable end-to-end performance, and setting the target for the design of any relay-to-GW delivery strategy.

For the downlink phase, we focus on a \emph{decode and forward} (D\&F) approach, so that each receiver re-encodes and transmits only packets it has correctly retrieved during the uplink phase, or possibly linear combinations thereof. A finite downlink capacity is assumed, and relays have to share a common bandwidth to communicate to the GW by means of a TDMA scheme. In order to get an insightful characterization of the optimum achievable system performance, we assume relay-to-GW links to be error free, and let resource allocation for the D\&F phase be performed ideally and without additional cost by the central collecting unit.

We then complement our study in Appendix~\ref{app:SIC} by considering, for the simplified $K=2$ scenario, an \emph{amplify and forward} (A\&F) approach. In this case, relays simply deliver an amplified version of the analog waveform (possibly the outcome of a collision) they received, whereas the GW performs decoding relying on successive interference cancellation (SIC) techniques. The goal of such an investigation is to derive a characterization of the gains that are achievable by jointly processing signals incoming at different receivers. Along this line of reasoning, we will focus on an idealized downlink, such that information can be reliably delivered to the collecting unit at no cost in terms of bandwidth.

\subsection{Notation}
Prior to delving into the details of our mathematical framework, we introduce in the following some useful notation. All the variables will be properly introduced when needed in the discussion, and the present section is simply meant to offer a quick reference point throughout the reading.

$K$ relays are available, and, within time slot $t$, the countably infinite set of possible outcomes at each of them is labeled as $\Omega_t:=\{\omega_0^t,\omega_1^t,\omega_2^t,\ldots,\omega_\infty^t\}$ for each $t=1,2,\ldots,n$. Here, $\omega_0^t$ denotes the erasure event (given either by a collision or by an idle slot), while $\omega_j^t$ indicates the event that the packet of the $j$-th user arriving in slot $t$ was received. According to this notation, we define as
$X_k^t$ the random variables with alphabet $\mathbb N$, where $X_k^t = j$ if $\omega_j^t$ was the observation at relay $k$.
When needed for mathematical discussion, we let the uplink operate for $n$ time slots. In this case, let $\mc A_k^n$ be the set of collected packets after $n$ time slots at receiver $k$, where $\mc A_k^n \subsetneq \bigcup_{t=1}^n \{ \Omega_t \backslash \omega_0^t\}$. That is, we do not add the erasure events to $\mc A_k^n$. The number of received packets at relay $k$ after $n$ time slots is thus $|\mc A_k^n|$.

In general, the complement of a set $\mc A$ is indicated as $\overline{\mc A}$. We write vectors as lowercase underlined variables, e.g.,  $\ve w$, while matrices and their transposes are labeled by uppercase letters, e.g.,  $G$ and $G^T$.

\section{A Characterization of System Uplink} \label{sec:uplink}

With reference to the topology of Fig.~\ref{fig:simple_topology}, we first consider the uplink phase. In order to gather a comprehensive description of the improvements enabled by receiver diversity, we characterize the system by means of two somewhat complementary metrics: uplink throughput (Section~\ref{sec:uplinkThroughput}) and packet loss rate (Section~\ref{sec:PLR}).

\subsection{Uplink Throughput} \label{sec:uplinkThroughput}
Let us focus on the random access channel, and, following the definition introduced in Section~\ref{sec:systemModel}, let $C$ be the number of packets collected by the relays over one slot. $C$ is a r.v. with outcomes in the set $\{0,1,2, \ldots, K\}$, where the maximum value occurs when the $K$ receivers decode distinct packets due to different erasure patterns. The average uplink throughput can thus be expressed by conditioning on the number of concurrent transmissions as:
\begin{equation}
\mathcal T_{up,K} =\!\mathbb E_U [ \, \mathbb E[ \,C \,|\, U \,]\, ] \!=\!\! \sum\limits_{u=0}^\infty \frac{\rho^u e^{-\rho}}{u!}  \sum\limits_{c=0}^K c\,\textrm{Pr}\{C=c \,|\,U=u\}.
\label{eq:truUplink_general}
\end{equation}
While Eq.~(\ref{eq:truUplink_general}) formula holds for any $K$, the computation of the collection probabilities intrinsically depends on the number of available relays. In this perspective, we articulate our analysis by first considering the two-receiver case, to then extend the results for an arbitrary topology.

\subsubsection{The Two-Receiver Case}

Let us first then focus on the case in which only two relays are available. Such a scenario allows a compact mathematical derivation of the uplink throughput, as the events leading to packet collection at the relays set can easily be expressed. On the other hand, it also represents a case of practical relevance, as it can be instantiated by simply adding a receiver to an existing SA-based system.
\begin{figure}
\centering
\includegraphics[width=\columnwidth]{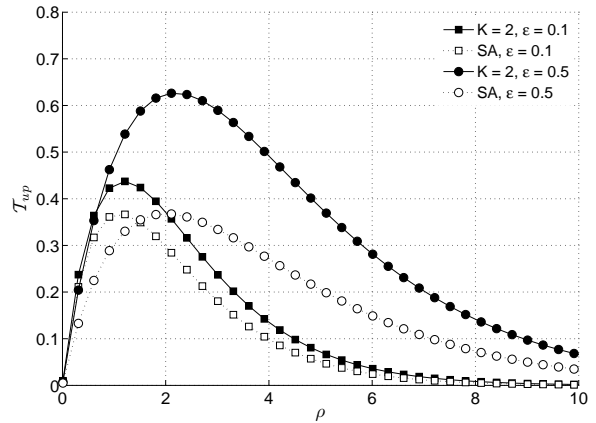}
\caption{Average uplink throughput vs channel load under different erasure probabilities. Black markers indicate the performance in the presence of two receivers, whereas white markers report the behavior of pure SA.}
\label{fig:truputUplink}
\end{figure}
When $K=2$, the situation for $C=1$ can easily be accounted for, since a single packet can be collected as soon as at least one of the relays does not undergo an erasure, i.e., with overall probability $1-\varepsilon^2$. On the other hand, by virtue of the binomial distribution of $U$, the event of collecting a single information unit over one slot occurs with probability
\begin{align}
\textrm{Pr}\{C =1 \,|\,U=u\} =& \,2u(1-\varepsilon) \varepsilon^{u-1}\left[ 1 - u(1-\varepsilon) \varepsilon^{u-1} \right]  \nonumber \\  &+ u(1-\varepsilon)^2 \varepsilon^{2(u-1)},
\end{align}
where the former addend accounts for the case in which one relay decodes a packet while the other does not (either due to erasures or to a collision), whereas the latter tracks the case of having the two relays decoding the same information unit. Conversely, a reward of two packets is obtained only when the receivers successfully retrieve distinct units, with probability
 \begin{equation}
\textrm{Pr}\{C=2 \,|\,U=u\} = u(u-1)(1-\varepsilon)^2 \varepsilon^{2(u-1)}.
\end{equation}
Plugging these results into (\ref{eq:truUplink_general}) we get, after some calculations, a closed-form expression for the throughput in the uplink and thus, as discussed, also for the end-to-end D\&F case with infinite downlink capacity:
\begin{equation}
\mathcal T_{up,2} = 2\rho (1-\varepsilon)\, e^{-\rho (1-\varepsilon)} - \rho (1-\varepsilon)^2 \, e^{-\rho (1-\varepsilon^2)}.
\label{eq:truUplink_closedForm}
\end{equation}
The trend of $\mathcal T_{up,2}$ is reported in Fig.~\ref{fig:truputUplink} against the channel load $\rho$ for different values of the erasure probability, and compared to the performance in the presence of a single receiver, i.e., $\mathcal T_{sa}$.
Eq.~(\ref{eq:truUplink_closedForm}) conveniently expresses $\mathcal T_{up,2}$ as twice the throughput of SA in the presence of erasures, reduced by a loss factor which accounts for the possibility of having both relays decode the same information unit. In this perspective, it is interesting to evaluate the maximum throughput $\mathcal T_{up,2}^{*}(\varepsilon)$ as well as the optimal working point $\rho^{*}(\varepsilon)$ achieving it for the system uplink. The transcendental nature of (\ref{eq:truUplink_closedForm}) does not allow to obtain a closed formulation of these quantities, which, on the other hand, can easily be estimated by means of numerical optimization techniques.
\begin{figure}
\centering
\includegraphics[width=\columnwidth]{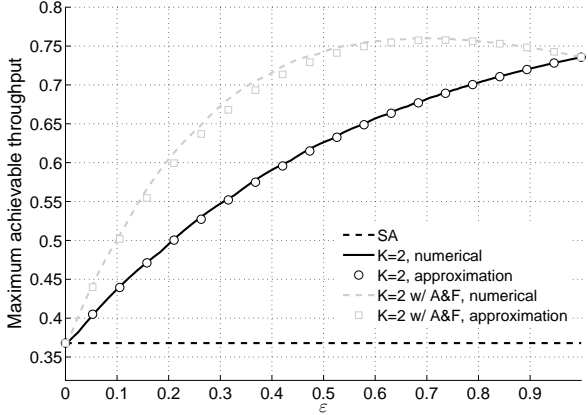}
\caption{Maximum uplink throughput vs erasure rate. The black continuous line reports the performance $\mc T^*_{up,2}$ of a two-receiver scheme, while white circled markers indicate $\mc T_{up,2}(1/(1-\varepsilon))$, and the dotted line shows the behavior of pure SA. Gray curves and markers are to be referred to the \emph{amplify and forward} case, that will be treated in Appendix~\ref{app:SIC}.}
\label{fig:maxTruVsEpsilon}
\end{figure}
The results of this analysis are reported in Fig.~\ref{fig:maxTruVsEpsilon}, where the peak throughput $\mathcal T_{up}^*$ is depicted by the black curve as a function of $\varepsilon$ and compared to the performance of SA, which clearly collects on average at most 0.36 pkt/slot regardless of the erasure rate. In ideal channel conditions, i.e., $\varepsilon=0$, no benefits can be obtained by resorting to multiple relays, as all of them would see the same reception set across slots. Conversely, higher values of $\varepsilon$ favour a decorrelation of the pattern of packets that can be correctly retrieved, and consequently improve the achievable throughput at the expense of higher loss rates. The result is a monotonically increasing behavior for $\mathcal T^*_{up,2}(\varepsilon)$, prior to plummeting with a singularity to a null throughput for the degenerate case $\varepsilon=1$. Fig.~\ref{fig:maxTruVsEpsilon} also reports (circled-white markers) the average throughput obtained for $\rho=1/(1-\varepsilon)$, i.e., when the
uplink
of the system under consideration operates at the optimal working point for a single-receiver SA, showing a tight match. In fact, even though the abscissa of maximum $\rho^*(\varepsilon)$ may differ from this value (they coincide only for the ideal case $\varepsilon=0$), the error which is committed when approximating $\mathcal T_{up,2}^*$ with $\mathcal T_{up,2}(1/(1-\varepsilon))$ can easily be shown numerically to never exceed 0.6\%, due to the very small slope of the function in the neighborhood of $\rho^*(\varepsilon)$. We can thus provide a very precise estimate of the peak uplink performance for a specific erasure rate as:
\begin{equation}
\mathcal T_{up,2}^*(\varepsilon) \simeq \frac{2}{e} - (1-\varepsilon)\, e^{-1-\varepsilon}, \hspace{.5cm} 0\leq \varepsilon < 1.
\label{eq:truUplink_peakApprox}
\end{equation}
which once again compactly captures the behavior of a two-receiver scenario by quantifying the loss with respect to twice the performance of SA.
In this perspective, two remarks shall be made. First of all, in order to approach the upper bound, the system has to be operated at very high load, as $\rho^* \simeq 1/(1-\varepsilon)$). These working points are typically not of interest, since very low levels of reliability can be provided by a congested channel with high erasure rates. Nevertheless, the presence of a second receiver triggers remarkable improvements already for loss probabilities that are of practical relevance, e.g., under harsh fading conditions or for satellite networks. Indeed, with $\varepsilon=0.1$ a $\sim 15\%$ raise can be spotted, whereas a loss rate of 20\% already leads to a 50\% throughput gain.
Secondly, the proposed framework highlights how no modifications in terms of system load are needed with respect to plain SA for a two-receiver system to be very efficiently operated. Such a result is particularly interesting, as it suggests that a relay node can be seamlessly and efficiently added to an already operating SA uplink when available, triggering the maximum achievable benefit without the need to undergo a re-tuning of the system which might be particularly expensive in terms of resources.
%\footnote{For example, tweaking the operating point for a satellite-based DVB-RCS2 system would imply computing and distributing to ground users new backoff parameters for the SA-uplink channel access.}

\subsubsection{The General Case, $K>2$}
\label{sec:general_throughput}
Let us now focus on the general topology reported in Fig.~\ref{fig:simple_topology}, where $K$ relays are available. While conceptually applicable, the approach presented to compute the uplink throughput in the two-receiver case becomes cumbersome as $K$ grows, due to the rapidly increasing number of events that have to be accounted for.
In order to characterize $T_{up,K}$, then, we follow a different strategy.
With reference to a single slot $t$, let $\Omega_t:=\{\omega_0^t,\omega_1^t,\omega_2^t,\ldots,\omega_\infty^t\}$ for each $t=1,2,\ldots,n$ be the countably infinite set of possible outcomes at each relay, where $\omega_0^t$ denotes the erasure event while $\omega_j^t$ indicates the event that the packet of the $j$-th user arriving in slot $t$ was received. Let us furthermore define as
$X_k^t$ the random variables with alphabet $\mathcal X=\{0,1,2,\ldots,\infty\}$, where $X_k^t = j$ if $\omega_j^t$ was the observation at relay $k$, so that $X_k^1,X_k^2,\ldots, X_k^n$ is an i.i.d. sequence for each relay $k$. We let the uplink operate for $n$ time slots, and indicate as $\mc A_k^n$ the set of packets collected at receiver $k$ over this time-span, where $\mc A_k^n \subsetneq \bigcup_{t=1}^n \{ \Omega_t \backslash \omega_0^t\}$ (i.e., we do not add the erasure events to $\mc A_k^n$). The number of received packets at relay $k$ after $n$ time slots is thus $|\mc A_k^n|$ and, with reference to this notation, we prove the following result:
\begin{prop}
 For an arbitrary number of $K$ relays, the throughput $\mathcal T_{up,K}$ is given by
 \begin{align}
 \mathcal T_{up,K} = \sum_{k=1}^K (-1)^{k-1} {K \choose k} \rho (1-\varepsilon)^k e^{-\rho(1-\varepsilon^k)}
\end{align}
\end{prop}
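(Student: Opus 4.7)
The plan is to compute $\mathcal T_{up,K}$ by writing the number of collected packets per slot as a sum of indicator variables, one per user in the slot, and then applying inclusion--exclusion to express the probability that a given packet is seen by at least one of the $K$ relays. Finally I would take expectation over $U\sim\mathrm{Poisson}(\rho)$ and recognise the inner sum as a Taylor expansion of the exponential.

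Concretely, conditional on $U=u$ transmissions in the slot, let $A_{i,j}$ denote the event ``relay $j$ decodes packet $i$''. This event requires packet $i$ to arrive unfaded at relay $j$, and each of the remaining $u-1$ packets to be faded at relay $j$; since on--off fading is independent across the $(\text{user},\text{relay})$ pairs, $\Pr(A_{i,j}\mid U=u)=(1-\varepsilon)\varepsilon^{u-1}$. The number of collected packets is $C=\sum_{i=1}^{U}\mathbbm{1}\bigl[\bigcup_{j=1}^{K}A_{i,j}\bigr]$. By symmetry, $\mathbb{E}[C\mid U=u]=u\,\Pr\bigl(\bigcup_{j=1}^{K}A_{1,j}\mid U=u\bigr)$, so inclusion--exclusion gives
\begin{equation}
\mathbb{E}[C\mid U=u]=u\sum_{k=1}^{K}(-1)^{k-1}\binom{K}{k}\Pr\Bigl(\bigcap_{j=1}^{k}A_{1,j}\,\Big|\,U=u\Bigr).
\end{equation}

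The key computation is the joint probability: for a fixed subset of $k$ relays all to decode packet $1$, packet $1$ must be unfaded at each of the $k$ relays (probability $(1-\varepsilon)^{k}$ by independence across relays) and each of the other $u-1$ packets must be faded at each of those $k$ relays (probability $\varepsilon^{k}$ per packet, again by independence). Hence $\Pr\bigl(\bigcap_{j=1}^{k}A_{1,j}\mid U=u\bigr)=(1-\varepsilon)^{k}\varepsilon^{k(u-1)}$. Plugging this into \eqref{eq:truUplink_general} (or equivalently taking $\mathbb{E}_U$ directly) and swapping the finite $k$-sum with the $u$-sum, the Poisson expectation for each $k$ becomes
\begin{equation}
\sum_{u=1}^{\infty}\frac{\rho^{u}e^{-\rho}}{(u-1)!}\,\varepsilon^{k(u-1)}=\rho e^{-\rho}\sum_{m=0}^{\infty}\frac{(\rho\varepsilon^{k})^{m}}{m!}=\rho e^{-\rho(1-\varepsilon^{k})},
\end{equation}
which upon reassembly yields exactly the claimed closed form.

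The only step that needs care is the joint probability $\Pr\bigl(\bigcap_{j=1}^{k}A_{1,j}\mid U=u\bigr)$: the events $A_{1,j}$ and $A_{1,j'}$ are not independent as events about packet $1$ alone, because they both involve the fate of the other $u-1$ packets; the argument that ``$\Pr(\cap_j A_{1,j})=\prod_j\Pr(A_{1,j})/\text{something}$'' would be wrong. The correct reasoning is to condition on the packets and exploit that the fading realisations are independent across relays for \emph{every} packet simultaneously, so the probability factorises over relays as $(1-\varepsilon)^{k}\varepsilon^{k(u-1)}$ rather than as $\bigl[(1-\varepsilon)\varepsilon^{u-1}\bigr]^{k}$. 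Once this point is handled, the remainder is routine algebra and the finite alternating sum is justified because $K$ is finite, so no convergence issues arise in the swap of summations.
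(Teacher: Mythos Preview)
Your argument is correct and, in fact, more direct than the paper's. The paper introduces a time horizon of $n$ slots, applies inclusion--exclusion to the \emph{sets} $\mc A_k^n$ of packets collected by each relay, and then invokes the weak law of large numbers to pass from $\frac{1}{n}|\bigcup_k \mc A_k^n|$ to a probability. You instead work in a single slot, write $C$ as a sum of per-packet indicators, and apply inclusion--exclusion to the \emph{events} $\{A_{1,j}\}_{j}$ for a tagged packet. Both routes land on the same key quantity---the probability that a given packet is simultaneously decoded by a fixed set of $k$ relays---but yours avoids the detour through time averaging and is closer in spirit to the paper's later packet-loss computation in Section~\ref{sec:PLR}.

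One remark: your closing caveat is confused. Conditionally on $U=u$, the events $A_{1,j}$ and $A_{1,j'}$ for $j\neq j'$ \emph{are} independent, because they are functions of disjoint families of fading coefficients (those at relay $j$ versus those at relay $j'$). Consequently $\Pr\bigl(\bigcap_{j=1}^k A_{1,j}\mid U=u\bigr)=\prod_{j=1}^k\Pr(A_{1,j}\mid U=u)=\bigl[(1-\varepsilon)\varepsilon^{u-1}\bigr]^k$, and the two expressions you contrast, $(1-\varepsilon)^k\varepsilon^{k(u-1)}$ and $\bigl[(1-\varepsilon)\varepsilon^{u-1}\bigr]^k$, are identical. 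The computation is right; only the warning is spurious.
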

\vspace{2mm}
\begin{proof}
We have $|\mc A_k^n| = \sum_{t=1}^n \mathbbm{1}_{\{X_k^t\not = 0\}}$, where $\mathbbm{1}_{\{E\}}$ denotes the indicator random variable that takes on the value $1$ if the event $E$ is true and $0$ otherwise.
The throughput seen by a single relay can then be written as $\mathcal T_{up,1}=\mathbb E[\mathbbm{1}_{\{X_k^t\not = 0\}}] = \Pr\{X_k^t\not = 0\}$, and does not depend on the specific receiver being considered. By the weak law of large numbers,
\begin{align}
 \mathcal T_{up,1} = \lim_{n\rightarrow \infty} \frac{|\mc A_k^n| }{n}
\end{align}
or, more formally,
\begin{align}
 \lim_{n\rightarrow\infty} \Pr\left\{ \left| \frac{|\mc A_k^n| }{n} - \mathcal T_{up,1} \right| > \epsilon \right\} = 0 ~\text{for some }\epsilon > 0.
\end{align}
Similarly, for $K$ relays we have
\begin{align}
\mathcal T_{up,K} = \lim_{n\rightarrow\infty} \frac{|\bigcup_{k=1}^K \mc A_k^n|}{n}
\end{align}
By the inclusion-exclusion principle (see, e.g., \cite{slomson1991introduction}), we have
\begin{align}
 \left|\bigcup_{k=1}^K \mc A_k^n\right| = \sum_{\mc S\subseteq \{1,\ldots,K\}, \mc S \not = \emptyset} (-1)^{|\mc S|-1} \left| \mc I_{\mc S}^n \right|\\
 \text{with } \mc I_{\mc S}^n = \bigcap_{k \in \mc S} \mc A_k^n \label{eq:defI_S}
\end{align}
Here, $\mc I_{\mc S}^n$ denotes the set of packets that all the relay nodes specified by $\mc S = \{k_1,k_2,\ldots,k_{|\mc S|}\}$ have in common:
\begin{align}
 \left| \mc I_{\mc S}^n\right|=\left|\bigcap_{k \in \mc S} \mc A_k^n\right| = \sum_{t=1}^n \mathbbm{1}_{\{0 \not = X_{k_1}^t = X_{k_2}^t = \ldots = X_{k_{|\mc S|}}^t \} }
\end{align}
Due to symmetry in the setup, the value of $\left| \mc I_{\mc S}^n\right|$ only depends on the cardinality of $\mc S$ but not the explicit choice, so that $\left|\mc I_{\mc S}^n\right| = a_k^n$ for $k = |\mc S|$, and,
$$\left|\bigcup_{k=1}^K \mc A_k^n\right| = \sum_{k=1}^K (-1)^{k-1} {K \choose k} a_k^n.$$
As $X_k^1,X_k^2,\ldots, X_k^n$ are i.i.d., by the weak law of large numbers we have:
\begin{align}
 \lim_{n\rightarrow\infty}\frac{\left|\mc I_{\mc S}^n \right|}{n} =\Pr[\{0 \not = X_{k_1}^t = X_{k_2}^t = \ldots = X_{k_{|\mc S|}}^t \}].
\end{align}

We can compute the latter probability as
\begin{align}
\Pr\{0 &\not = X_{k_1}^t = \ldots = X_{k_{|\mc S|}}^t \} \nonumber\\
 = &\sum_{u} \Pr\{0 \not = X_{k_1}^t = \ldots = X_{k_{|\mc S|}}^t |U = u\}\Pr\{U=u\} \nonumber\\
 = & \sum_{u=1}^{\infty} \frac{e^{-\rho}\rho^u}{u!} {u \choose 1} \left((1-\varepsilon) \varepsilon^{u-1}\right)^{|\mc S|} \nonumber \\
 = & (1-\varepsilon)^{|\mc S|}\rho e^{-\rho(1-\varepsilon^{|\mc S|})}
\end{align}
As $\lim_{n\rightarrow\infty} \frac{a_k^n}{n} = (1-\varepsilon)^{k}\rho e^{-\rho(1-\varepsilon^{k})}$, the proposition follows.
\end{proof}

\begin{figure}
\centering
\includegraphics[width=\columnwidth]{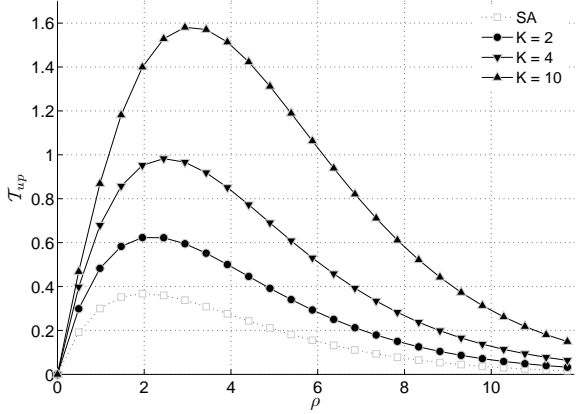}
\caption{Average uplink throughput vs channel load for different number of relays $K$. The erasure probability has been set to $\varepsilon=0.5$.}
\label{fig:truputUplink5rec}
\end{figure}

The performance achievable by increasing the number of relays is reported against the channel load in Fig.~\ref{fig:truputUplink5rec} for a reference erasure rate $\varepsilon=0.2$. As expected, $\mathcal T_{up,K}$ benefits from a higher degree of spatial diversity, showing how the system can collect more than one packet per uplink slot as soon as more than four receivers are available, for the parameters under consideration. Such a result stems from two main factors. On the one hand, increasing $K$ enables larger peak throughput over a single slot, as up to $K$ different data units can be simultaneously retrieved. On the other hand, broader receivers sets improve the probability of decoding packets in the presence of collisions even when less than $K$ users accessed the channel, by virtue of the different erasure patterns they experience.
The uplink throughput characterization is complemented by Fig.~\ref{fig:asymptoticULTru}, which reports the peak value for $\mathcal T^*_{up,K}$ (solid black curve), obtained by properly setting the channel load to $\rho^*_{K}$ (whose values are shown by the gray dashed curve), for an increasing relay population.\footnote{ As discussed for the $K=2$ case, a mathematical derivation of the optimal working point load $\rho^*_{K}$ is not straightforward, and simple numerical maximization techniques were employed to obtain the results of Fig~\ref{fig:asymptoticULTru}.} The plot clearly highlights how the benefit brought by introducing an additional receiver to the scheme, quantified by Eq.~(\ref{eq:incrementalGain}), progressively reduces, leading to a growth rate for the achievable throughput that is less than linear and that exhibits a logarithmic-like trend in $K$.
\begin{align}
\nonumber\Delta_{\mathcal T_{up}} &= \mathcal{T}_{up,K} - \mathcal{T}_{up,K-1} \\
&= \sum_{k=1}^K (-1)^{k-1} {K-1 \choose k-1} \rho (1-\varepsilon)^k e^{\rho (1-\varepsilon^k)}
\label{eq:incrementalGain}
\end{align}
\begin{figure}
\centering
\includegraphics[width=\columnwidth]{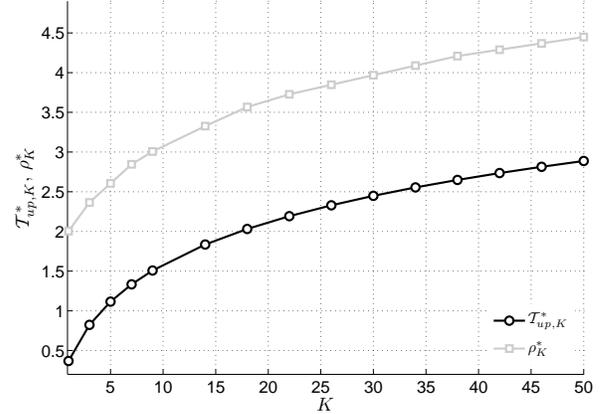}
\caption{Maximum achievable throughput $\mathcal T_{up,K}^*$ as a function of the number of relays $K$ for an erasure rate $\varepsilon=0.5$. The gray curve reports the load on the channel $\rho_{K}^*$ needed to reach $\mathcal T_{up,K}^*$.}
\label{fig:asymptoticULTru}
\end{figure}

\subsection{Packet loss probability} \label{sec:PLR}
The aggregate throughput derived in Section~\ref{sec:uplinkThroughput} represents a metric of interest towards understanding the potential of SA with diversity when aiming at reaping the most out of uplink bandwidth. On the other hand, operating an Aloha-based system at the optimal load $\rho^*_{K}$ exposes each transmitted packet to a loss probability that may not be negligible. In the classical single-receiver case without fading, for instance, the probability for a data unit not to be collected evaluates to $1-e^{-1}\simeq 0.63$. From this standpoint, in fact, several applications may resort to a lightly loaded random access uplink, aiming at a higher level of delivery reliability rather than at a high throughput. This is the case, for example, of channels used for logon and control signalling in many practical wireless networks. In order to investigate how diversity can improve performance in this direction, we extend our framework by computing the probability $\zeta_K$ that a user accessing the channel experiences a data loss, i.e., that the information unit it sends is not collected, either due to fading or to collisions, by any of the $K$ relays.

To this aim, let $\mc O$ describe the event that the packet of the observed user sent over time slot $t$ is not received by any of the receivers. Conditioning on the number of interferers $i$, i.e., of data units that were concurrently present on the uplink channel at $t$, the sought probability can be written as:
\begin{align}
 \zeta_K = \sum_{i=0}^\infty \Pr[\mc O|I=i] \Pr[I=i].
\end{align}
Here, the conditional probability can easily be determined recalling that each of the $K$ relays experiences an independent erasure pattern, obtaining $\Pr[\mc O|I=i] = (1-(1-\varepsilon)\varepsilon^{i})^K$ for an individual packet and $K$ relays with independent erasures on all individual links.
By resorting to the binomial theorem, such an expression can be conveniently reformulated as:
\begin{align}
 \Pr[\mc O|I=i] = \sum_{k=0}^K (-1)^k {K \choose k} \left((1-\varepsilon)\varepsilon^{i}\right)^k.
\end{align}
On the other hand, the number of interferers seen by a user that accesses the channel at time $t$ still follows a Poisson distribution of intensity $\rho$, so that, after simple calculations we finally get:
\begin{align}
 \zeta_K = \sum_{k=0}^K (-1)^k {K \choose k} (1-\varepsilon)^k e^{-\rho(1-\varepsilon^k)}.
 \label{eq:zeta_K}
\end{align}
\begin{figure}
\centering
\includegraphics[width=\columnwidth]{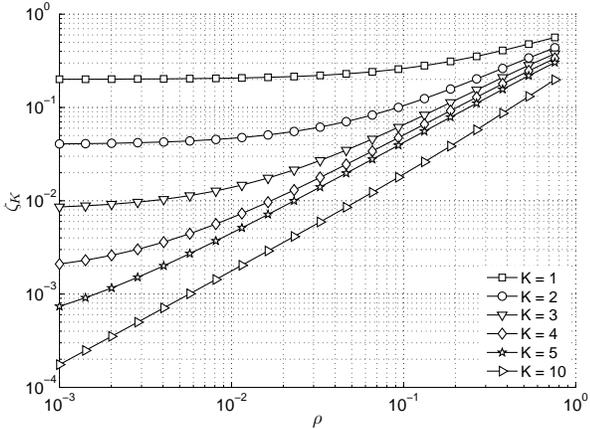}
\caption{Probability $\zeta_K$ that a packet sent by a user is not received by any of the relays. Different curves indicate different values of $K$, while the erasure probability has been set to $\varepsilon=0.2$.}
\label{fig:erasure_eff}
\end{figure}

Fig.~\ref{fig:erasure_eff} reports the behavior of $\zeta_K$ as a function of $\rho$ when the erasure rate over a single link is set to $\varepsilon = 0.2$. Different lines indicate the trend when increasing the number of receivers from 1 to 10. As expected, when $\rho\rightarrow 0$, a user accessing the channel is not likely to experience any interference, so that failures can only be induced by erasures, leading to an overall loss probability of $\varepsilon^K$. In this perspective, the availability of multiple receivers triggers a dramatic improvement, enabling levels of reliability that would otherwise not be possible irrespective of the channel configuration. On the other hand, Eq.~(\ref{eq:zeta_K}) turns out to be useful for system design, as it allows to determine the load that can be supported on the uplink channel while guaranteeing a target loss rate. Also in this case diversity can significantly ameliorate the performance. As shown in Fig.~\ref{fig:erasure_eff}, for example, a target loss rate $\zeta=5\cdot 10^{-2}$ is achieved by a three- and four-receiver scheme under 6- and 10-fold larger loads compared to the $K=2$ case, respectively.

 \section{Downlink Strategies} \label{sec:finiteDLCapacity}

The analysis carried out in Section~\ref{sec:uplink} has characterized the average number of packets that can be decoded at the relay set when SA is used in the uplink. We now instead consider the complementary task of delivering what has been collected to a central GW. In doing so, we aim at employing the minimum number of resources in terms of transmissions that have to be performed by the relays, while not allowing any information exchange among them. In particular, we consider a finite-capacity downlink, where the $K$ receivers share a common bandwidth to communicate with the GW by means of a time division multiple access scheme, and we assume that each of them can reliably deliver exactly one packet, possibly composed of a linear combination of what has been collected, over one time unit. We once again focus on a horizon of $n$ slots to operate the uplink, after which the downlink phase starts.

We structure our analysis in two parts. First, in Section~\ref{sec:bounds}, we derive lower bounds for the rates (in terms of downlink slots allocated per uplink slot) that have to assigned to receivers in order to deliver the whole set of data units collected in the uplink over the $n$ slots. Then, Section~\ref{sec:nc} shows how a simple forwarding strategy based on random linear network coding suffices to achieve optimality, completing the downlink phase in $\mathcal T_{up,K}$ slots for asymptotically large values of $n$.

Prior to delving into the details, let us introduce some useful notation. We denote the $L$-bit data part of packet of the $j$-th user arriving in time slot $t$ as $W_k^t \in \mc W$, with $\mc W = \mathbb{F}_{2^L} \cup e$, where $e$ is added as the erasure symbol. We furthermore assume that the receiver can determine the corresponding user through a packet header, i.e., the receiver knows both $j$ and $t$ after successful reception.
As the uplink operates over $n$ time slots, relay $k$ observes the vector $\ve w_k = [W_k^1, \ldots, W_k^n]$. In each time slot, the tuple $(W_1^t, W_2^t, \ldots, W_K^t)$ is drawn from a joint probability distribution $P_{W_1 \ldots W_K}$ which is governed by the uplink, and different relays might receive the same packet.

\subsection{Bounds for Downlink Rates} \label{sec:bounds}
Each relay $k$ transmits a packet in each of its $nR_k$ downlink slots.
We are interested in the set of rates $(R_k)_{k=1}^K$ such that the gateway can recover all packets (with high probability).
%Define the noiseless downlink rate (in packets per slot) from relay $k$ to the gateway as $R_k$.
%Define $\mc B^n$ as the set of packets collected at the gateway  ...................................
%and let $\mc A^n=: \bigcup_{k=1}^K \mc A_k^n$, so $\mc B^n \subseteq \mc A^n$.
%We are interested in the set of rates $(R_k)_{k=1}^K$ that are necessary to ensure that $\lim_{n\rightarrow \infty}\Pr[\{\mc A^n = \mc B^n\}]=1$.

This is essentially the problem of distributed source coding (SW-Coding \cite{slepian1973noiseless}), with the following modification:
SW-coding ensures that the gateway can recover all $K$ observed strings $\ve w_k$, $k=1,2,\ldots,K$ perfectly.
In this setup, the gateway should be able to recover every packet that was received at any relay. However, neither is the gateway interested in erasures symbols at the relays, i.e. whenever $W_k^t=e$ for any $k$, $t$, nor in reconstructing each relay sequence perfectly.
%Neither does the order of the packets matter nor is the gateway explicitly required to be able to reconstruct each relay sequence perfectly.
%In that sense, the problem at hand can be considered as a lossy source coding problem with an accordingly defined distortion function.
The authors in \cite{dana2006capacity} overcame this problem by assuming that the decoder knows all the erasure positions of the whole network. This assumption applies in our case as packet numbers are supposed to be known via a packet header.
Let all erasure positions be represented by $\Gamma$.

The rates $(R_1, \ldots, R_K)$ are achievable \cite{slepian1973noiseless} if
\begin{align}
 \sum_{k \in \mc S} R_k \geq H(W_{\mc S}|W_{\overline{\mc S}},\Gamma), \quad \forall~ S\subseteq [1,2,\ldots, K]
\end{align}
where $ W_{\mc S} = (W_{k_1}^t, W_{k_2}^t, \ldots, W_{k_{|\mc S|}}^t)$,  denotes the observations at some time $t$ at the subset of receivers specified by $\mc S = \{k_1, k_2, \ldots, k_{|\mc S|}\}$.
$\Gamma$ has the effect of removing the influence of the erasure symbols on the conditional entropies.
Computing the entropies however requires the full probability distribution $P_{W_1 \ldots W_K}$ which is a difficult task in general. By different means, we can obtain the equivalent conditions:
\vspace{1mm}
\begin{prop}
 The rates $(R_k)_{k=1}^K$ have to satisfy
 \begin{align}
  \sum_{k \in \mc S} R_k \!\geq \mc T_{up,K} \!+\!\! \sum_{k=1}^{K-|\mc S|} (-1)^k {K-|\mc S| \choose k} \rho (1-\varepsilon)^k e^{-\rho (1-\varepsilon^k)},\nonumber\\ \forall \mc S \subseteq \{1,\ldots,K\}
 \end{align}
\end{prop}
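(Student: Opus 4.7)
The plan is to start from the Slepian--Wolf bound already invoked in the excerpt, namely
\[
\sum_{k \in \mc S} R_k \;\geq\; \tfrac{1}{n}\, H(\ve w_{\mc S}\,|\,\ve w_{\overline{\mc S}},\Gamma),
\]
and to evaluate the right-hand side in the limit $n\rightarrow \infty$ by reducing it to a counting argument of the form already used for $\mc T_{up,K}$. Measuring rates in packets per uplink slot (so that each packet contributes one unit of entropy in $L$-bit units), the conditional entropy on the right becomes, given $\Gamma$, exactly the logarithm of the number of jointly typical realizations of $\ve w_{\mc S}$ compatible with $\ve w_{\overline{\mc S}}$. Because the $L$-bit payloads of distinct users are independent and uniform over $\mathbb{F}_{2^L}$, and $\Gamma$ reveals the packet indices observed at every relay, each packet that appears in $\bigcup_{k\in\mc S}\mc A_k^n$ but not in $\bigcup_{k\in\overline{\mc S}}\mc A_k^n$ contributes exactly one unit of conditional entropy (per $L$ bits), while all other packets in $\ve w_{\mc S}$ are either erasures or already determined by $\ve w_{\overline{\mc S}}$.

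Consequently I would establish
\[
\lim_{n\rightarrow\infty} \frac{H(\ve w_{\mc S}\,|\,\ve w_{\overline{\mc S}},\Gamma)}{n} \;=\; \lim_{n\rightarrow\infty} \frac{1}{n}\,\mathbb E\!\left[\,\Bigl|\bigcup_{k\in\mc S}\mc A_k^n \,\setminus\, \bigcup_{k\in\overline{\mc S}}\mc A_k^n\Bigr|\,\right].
\]
Writing the set difference as $|\bigcup_{k=1}^K\mc A_k^n| - |\bigcup_{k\in\overline{\mc S}}\mc A_k^n|$, the first term is handled by the very same inclusion-exclusion computation carried out in the proof of the previous proposition, which gives the asymptotic value $\mc T_{up,K}$. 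The second term is an analogous expression but over $K-|\mc S|$ relays only; by the symmetry of the setup and another application of inclusion-exclusion identical in structure to that of Proposition~1, its per-slot limit equals
\[
\sum_{k=1}^{K-|\mc S|} (-1)^{k-1}\binom{K-|\mc S|}{k}\rho(1-\varepsilon)^k e^{-\rho(1-\varepsilon^k)}.
\]
Subtracting, the signs flip and the stated lower bound follows.

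The main obstacle is the first step: arguing rigorously that, after conditioning on $\Gamma$, the per-slot conditional entropy coincides with the expected number of ``new'' distinct packets per slot. This requires the observation that (i) $\Gamma$ reveals, for each $t$, which user indices were received at each relay, so that the realizations of $\ve w_{\overline{\mc S}}$ pin down the payloads of every packet in $\bigcup_{k\in\overline{\mc S}}\mc A_k^n$, and (ii) the remaining unknown payloads, one per packet lying only in $\bigcup_{k\in\mc S}\mc A_k^n$, are mutually independent and uniform, hence contribute full entropy. Once this correspondence is in place, the remaining calculation is a direct application of the weak law of large numbers together with the inclusion-exclusion identity, closely mirroring the proof already supplied for $\mc T_{up,K}$. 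Setting $\mc S = \{1,\dots,K\}$ recovers $\sum_k R_k \geq \mc T_{up,K}$ as a consistency check, confirming that no downlink rate tuple below this aggregate can support recovery of all collected packets.
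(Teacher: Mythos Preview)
Your argument is correct, and from the point where you arrive at the quantity $\bigl|\bigcup_{k\in\mc S}\mc A_k^n\setminus\bigcup_{k\in\overline{\mc S}}\mc A_k^n\bigr|$ onward it is essentially identical to the paper's proof: both write this as $|\mc A^n|-|\bigcup_{k\in\overline{\mc S}}\mc A_k^n|$, apply inclusion--exclusion over the $K-|\mc S|$ relays in $\overline{\mc S}$, and invoke the per-slot limits already computed in Proposition~1.

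The genuine difference lies in how that counting quantity is reached. The paper deliberately avoids evaluating the Slepian--Wolf conditional entropy (remarking that this ``requires the full probability distribution $P_{W_1\ldots W_K}$ which is a difficult task in general'') and instead gives a direct cut-set style necessity argument: the relays in $\mc S$ must collectively describe every packet that was received only by members of $\mc S$, since no one else can supply it, hence $n\sum_{k\in\mc S}R_k\ge\bigl|\bigcup_{k\in\mc S}\mc A_k^n\setminus\bigcup_{k\in\overline{\mc S}}\mc A_k^n\bigr|$. You take the complementary route and actually compute $H(W_{\mc S}\mid W_{\overline{\mc S}},\Gamma)$, using that $\Gamma$ pins down the packet indices so that the residual uncertainty is exactly one uniform $L$-bit payload per packet unique to $\mc S$. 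Your approach has the merit of showing explicitly that the combinatorial bound \emph{coincides} with the Slepian--Wolf region (rather than merely being a valid converse), at the cost of the additional entropy bookkeeping you flag as the ``main obstacle''; the paper's approach is more elementary and self-contained but leaves the equivalence with the entropy expression implicit.
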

\vspace{1mm}
\begin{proof}
Consider a subset of relays $\mc S \subseteq \{1,\ldots,K\}$ and their buffer contents $\bigcup_{k\in \mc S}\mc A_k^n$ after $n$ time slots.
 In order to satisfy successful recovery at the gateway, at least all packets that have been collected only by nodes in the set $\mc S$ and not by anyone else have to be communicated to the gateway. That is,
 \begin{align}
  \sum_{k \in \mc S} n\cdot R_k \geq \left| \bigcup_{k\in \mc S} \mc A_k^n \backslash \bigcup_{k \in \overline{\mc S}} \mc A_k^n \right|,
  \label{eq:ratebound1}
 \end{align}
 with $\overline{\mc S} = \{1,\ldots,K\} \backslash S$.
 Note that $$\bigcup_{k\in \mc S} \mc A_k^n \backslash \bigcup_{k \in \overline{\mc S}} \mc A_k^n = \mc A^n\backslash \bigcup_{k \in \overline{\mc S}} \mc A_k^n,$$ so by the inclusion-exclusion principle and due to $|\overline{\mc S}| = K - |\mc S|$
 \begin{align}
  \left|\bigcup_{k\in \mc S} \mc A_k^n \backslash \bigcup_{k \in \overline{\mc S}} \mc A_k^n \right| = |\mc A_K^n| + \sum_{k=1}^{K-|\mc S|} (-1)^k {K-|\mc S| \choose k} a_k^n
 \end{align}
with $\left|\mc I_{\mc S}^n\right| = a_k^n$ for $k = |\mc S|$ as before.
By plugging in the value for $\lim_{n\rightarrow\infty} \frac{a_k^n}{n}$, the proposition follows.
\end{proof}

\vspace{-1mm}
\subsection{Random Linear Coding} \label{sec:nc}
By means of Proposition 2, we have derived a characterization of the rates that have to be assigned to relays in order to deliver the whole set of collected packets to the GW. In this section, we complete the discussion by proposing a strategy that is capable of matching such conditions, thus achieving optimality.
The solution that we employ is based on a straight-forward application of the well-known random linear coding scheme in \cite{ho2006random}, and will therefore only briefly sketched in the following.

Each relay $k$ generates a matrix $G_k \in \mathbb F_{2^L}^{nR_k \times n}$ and obtains the data part of its $nR_k$ transmit packets by $\ve c_k^T = G_k \ve w_k^T$.
%Multiplication and addition are according to $\mathbb F_{2^L}$.
Whenever an element of $\ve w_k$ was an erasure symbol, the corresponding column of $G_k$ is an all-zero column. Erasure symbols thus have no contributions to the transmit packets $\ve c_k$. All other elements of $G_k$  are drawn uniformly at random from $\mathbb F_{2^L}^*$, where $\mathbb F_{2^L}^*$ denotes the multiplicative group of $\mathbb F_{2^L}$.

The gateway collects all incoming packets and obtains the system of linear equations
\begin{align}
 \underbrace{\left( \begin{array}{c} \ve c_1^T\\ \ve c_2^T \\ \vdots \\ \ve c_K^T \end{array} \right)}_{\ve c^T} =
 \underbrace{\left(\begin{array}{cccc} G_1 & 0 & \ldots & 0\\0 & G_2 & \ldots & 0 \\ 0 & 0 & \ddots & 0 \\ 0 & 0 & \ldots & G_K \end{array}\right)}_{G}
 \underbrace{\left( \begin{array}{c} \ve w_1^T\\ \ve w_2^T \\ \vdots \\ \ve w_K^T \end{array} \right)}_{\ve w^T}
\end{align}
where $G\in \mathbb F_{2^L}^{n\sum_k R_k \times nK}$. Note that some elements of $\ve w$ can be identical because they were received by more than one relay and thus are elements of some $\ve w_{k_1}$, $\ve w_{k_2}, \ldots$.
One can merge these entries in $\ve w$ that appear more than once.
Additionally, we drop all erasure-symbols in $\ve w$ and delete the corresponding columns in $G$ to obtain the reduced system of equations
\begin{align}
 \ve c^T = \tilde G \ve{\tilde w}^T
\end{align}
where $\ve{\tilde w} \in \mathbb F_{2^L}^{|\mc A^n|}$ contains only distinct received packets and no erasure symbols.
Clearly, there are $\left| \bigcup_{k\in \mc S} \mc A_k^n \right| $ elements in $\ve{\tilde w}$.

We partition the entries in $\ve{\tilde w}$ into $2^K-1$ vectors $\ve{\tilde w}_{\mc S}$ for each nonempty subset $S\subseteq \{1,2,\ldots,K\}$: Each vector $\ve{\tilde w}_{\mc S}$ contains all packets that have been received only by all relays specified by $\mc S$ and not by anyone else.
That is, $\ve{\tilde w}_{\mc S}$ corresponds to the set $\mc P_{\mc S}^n = \bigcap_{k\in \mc S} \mc A_k^n \backslash \bigcup_{k \in \overline{\mc S}} \mc A_k^n $, its length is $|\mc P_{\mc S}^n|$.

The columns in $\tilde G$ and rows in $\ve{\tilde w}^T$ can be permuted such that one can write
\begin{align}
 \ve c_k^T = \tilde G_k \ve{\tilde w}:= \sum_{\mc S\subseteq \{1,2,\ldots,K\}} \tilde{G}_{k,\mc S} \cdot \ve{\tilde w}_{\mc S}, \quad \forall~k=1,\ldots,K.
\end{align}
Each of the matrices $\tilde{G}_{k,\mc S} \in \mathbb F_{2^L}^{nR_k \times |\mc P_{\mc S}^n|}$ contains only elements from $\mathbb F_{2^L}^*$ if $k\in S$ and is an all-zero matrix otherwise. A compact representation for $K=3$ is shown in (\ref{eq:exampleG_K=3}) at the bottom of next page.

The variables that are involved only in $n \sum_{k \in \mc S}R_k$ equations are those in $\ve{\tilde w}_{\mc L},~ \mc L \subseteq \mc S$, for each subset $\mc S\subseteq \{1,2,\ldots, K\}$.
For decoding, the number of equations has to be larger or equal to the number of variables, so a necessary condition for decoding is that $n \sum_{k \in \mc S}R_k \geq \sum_{\mc L \subseteq \mc S}  \left|\mc P_{\mc L}^n\right|$. This is satisfied by (\ref{eq:ratebound1}), since $\sum_{\mc L \subseteq \mc S}  \left|\mc P_{\mc L}^n\right| = \left|\bigcup_{k\in \mc S} \mc A_k^n \backslash \bigcup_{k \in \overline{\mc S}} \mathcal A_k^n   \right|$, as we show in Appendix~\ref{app:connection}

A sufficient condition is that the matrix $\tilde G_k,~ k\in \mc S$
representing $n \sum_{k \in \mc S}R_k$ equations has rank $\sum_{\mc L \subseteq \mc S}  \left|\mc P_{\mc L}^n\right|$
for each subset $\mc S\subseteq \{1,2,\ldots, K\}$.
Denote the set of indices of nonzero columns of matrix $\tilde G_k$ as the support of $\tilde G_k$.
Note that a row of matrix $\tilde G_k$ has a different support than a row of matrix $\tilde G_l$, for $k\not = l$. These rows are thus linearly independent. It thus suffices to check that all rows of matrix $\tilde G_k$ are linearly independent. As all nonzero elements are randomly drawn from $\mathbb F_{2^L}^*$, the probability of linear dependence goes to zero as $L$ grows large, completing our proof, and showing that the presented forwarding scheme achieves the bounds of Proposition 2.

\begin{figure*}[!bp]
\normalsize
\hrulefill
\begin{align}
 \left( \begin{array}{c} \ve c_1^T\\ \ve c_2^T \\  \ve c_3^T \end{array} \right) =
  \left(\begin{array}{c}\tilde G_1\\ \tilde G_2 \\ \tilde G_3
                            \end{array}\right)
                            \ve{\tilde w}^T
 =
 \left(\begin{array}{ccccccc} \tilde G_{1,\{1\}} & 0 & 0 & \tilde G_{1,\{1,2\}} & \tilde G_{1,\{1,3\}} & 0 & \tilde G_{1, \{1,2,3\}}
                            \\0 & \tilde G_{2,\{2\}} & 0 & \tilde G_{2,\{1,2\}} & 0 & \tilde G_{2, \{2,3\}} & \tilde G_{2, \{1,2,3\}}
                            \\0 & 0 & \tilde G_{3,\{3\}} & 0 & \tilde G_{3, \{1,3\}} & \tilde G_{3, \{2,3\}} & \tilde G_{3, \{1,2,3\}}
                            \end{array}\right)
 \left( \begin{array}{l} \ve{\tilde w}_{\{1\}}^T\\ \ve{\tilde w}_{\{2\}}^T \\ \ve{\tilde w}_{\{3\}}^T \\ \ve{\tilde w}_{\{1,2\}}^T \\ \ve{\tilde w}_{\{1,3\}}^T \\ \ve{\tilde w}_{\{2,3\}}^T \\ \ve{\tilde w}_{\{1,2,3\}}^T
 \end{array} \right)
 \label{eq:exampleG_K=3}
\end{align}
\end{figure*}

%
% Let $\ve x_k = [p_{j_1}^{t_1},p_{j_2}^{t_2}, \ldots ]$ denote the row vector of length $|\mc A_k^n|$ containing all received packets at relay $k$ up to time slot $n$, leaving out all erasures.
% Let $\ve x_{\mc S}$ denote the row vector of length $|\mc I_S^n|$ containing all packets that all relays specified by $\mc S$ have in common up to time slot $n$.
%
%
% Each relay $k$ transmits $ n \cdot R_k$ packets stacked in the vector $\ve c_k \in \mathbb F_{2^L}^{n R_k}$ with $\ve c_k^T = G_k \ve x_k^T$, $G_k \in \mathbb F_{2^L}^{n\cdot R_k \times |\mc A_k^n|}$.
%
% The gateway collects all incoming packets and solves the system of linear equations
% \begin{align}
%  \left(\begin{array}{cccc} G_1 & 0 & \ldots & 0\\0 & G_2 & \ldots & 0 \\ 0 & 0 & \ddots & 0 \\ 0 & 0 & \ldots & G_K \end{array}\right) \left( \begin{array}{c} \ve x_1^T\\ \ve x_2^T \\ \vdots \\ \ve x_K^T \end{array} \right) =  \left( \begin{array}{c} \ve c_1^T\\ \ve c_2^T \\ \vdots \\ \ve c_K^T \end{array} \right)
% \end{align}
% for the $\left| \bigcup_{k\in \mc S} A_k^n \right| $ unknowns in $\ve x$.
% This system of equations can be rearranged,
% If all entries in each of the matrices $G_k$ is drawn uniformly at random from $\mathbb F_{2^L}^*$, where $\mathbb F_{2^L}^*$ denotes the multiplicative group of $\mathbb F_{2^L}$.

%

\section{Conclusions}\label{sec:conclusions}

In this paper, a simple and practical extension of Slotted Aloha in the presence of multiple receivers, or relays, has been presented and thoroughly discussed. By means of an analytical framework, closed-form expression for the uplink throughput (defined as the average number of packets per slot \emph{collected} by the set of $K$ relays), as well as for the probability that a data unit is not retrieved by any of the receivers, have been derived for an arbitrary value of $K$ under the assumption of on-off fading. Remarkable gains have been shown and discussed already for a moderate number of receivers. The study is complemented by considering the problem of delivering the set of collected packets to a common gateway without allowing any information exchange among receivers. Theoretical bounds for the amount of resources that have to be allocated to achieve this task have been derived, and a simple scheme based on random linear network coding has been shown to match such bounds.

\appendices
\section{Amplify and Forward with SIC at the Gateway}
\label{app:SIC}

The framework developed in this paper has focused on characterizing the performance achievable by a SA system with receiver diversity when a D\&F scheme is implemented at intermediate nodes. On the other hand, restricting relays to simply forward what they have successfully retrieved in the uplink prevents the GW from performing joint decoding on possibly uncorrelated signals. In order to go beyond this limitation, we consider in this appendix the possibility for receivers to send in the downlink, on a slot-basis, an amplified version of the analog waveform they perceive even in the presence of a collision, following an \emph{amplify and forward} (A\&F) approach. For the sake of mathematical tractability, we focus on the $K=2$ case, and we model relays to instantly and reliably deliver information to the GW. The advantage of such an assumption is twofold. On the one hand, it will allow us once more to identify elegant closed-form expressions for the throughput of A\&F with spatial diversity, highlighting the fundamental tradeoffs that arise in the presence of multiple receivers. On the other hand, despite its ideality, the model under consideration is representative for several scenarios of practical interest, in which the bandwidth available in the downlink is much larger than the one of the uplink. Satellite networks, as well as topologies where multiple base stations or access points are connected to a coordinating unit via a wideband backbone may be examples in this direction.

At the gateway side, successive interference cancellation (SIC) is applied to the collected signals. This approach offers an improvement whenever the waveform forwarded by one of the relays allows decoding of a packet, say $x$, while the other reports a collision given by the superposition of $x$ and one other packet. In this condition, the set of relays would be able to collect only one information unit, whereas, with A\&F, the gateway can subtract the interference contribution of $x$ from the
collision-corrupted waveform and successfully collect the second packet as well. Details on the accuracy of this model on noisy channels with actual signal processing techniques can be found in \cite{DeGaudenzi07:CRDSA,Liva11:IRSA}.

The gain offered by SIC can thus be computed for the $K=2$ case by simply adding to the uplink throughput derived in Section~\ref{sec:uplinkThroughput} one additional collected data unit each time the described collision condition is met. Hence, we can write:
\begin{align}
\mathcal G_{SIC} =&\, \sum\limits_{u=2}^\infty \frac{\rho^u \,e^{-\rho}}{u!} \, 2u (u-1) \, \varepsilon^{u-1} (1-\varepsilon) \cdot \varepsilon^{u-2} (1-\varepsilon)^2\nonumber\\ =&\, 2 \rho^2 \varepsilon (1-\varepsilon)^3 \, e^{-\rho(1-\varepsilon^2)},
\end{align}
where, within the summation, $\varepsilon^{u-1} (1-\varepsilon)$ accounts for the correct reception at one relay while $\varepsilon^{u-2} (1-\varepsilon)^2$ enforces a collision of exactly two packets at the other relay, for a total of $2u(u-1)$ configurations that can be solved with SIC. The average number of collected packets at the GW per uplink slot, which we refer to as $\mathcal T_{A\&F}$, is thus simply expressed as $\mathcal T_{A\&F} = \mathcal T_{up,2}+ \mathcal G_{SIC}$:
\begin{align}
\mathcal T_{A\&F}
=& \, 2\rho(1-\varepsilon)\, e^{-\rho(1-\varepsilon)} - \rho(1-\varepsilon)^2\, e^{-\rho(1-\varepsilon^2)} \nonumber \\
+& 2 \rho^2 \varepsilon (1-\varepsilon)^3 \, e^{-\rho(1-\varepsilon^2)}.
\end{align}

\begin{figure}
\centering
\includegraphics[width=\columnwidth]{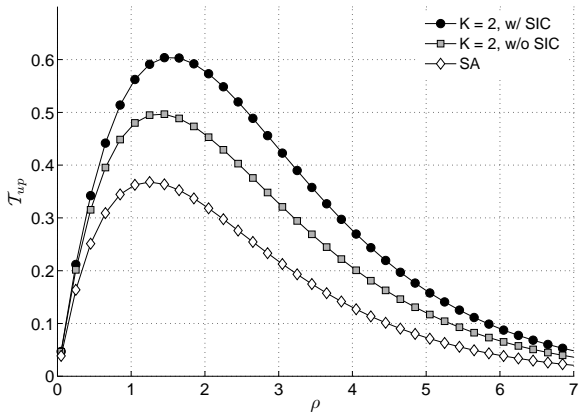}
\caption{End-to-end throughput vs uplink channel load with infinite downlink capacity and $\varepsilon=0.2$. The black-marked curve reports the behavior with A\&F; the gray-marked curve indicates D\&F; the white-marked curve shows the behavior of SA with a single receiver.}
\label{fig:e2eThroughputSIC}
\end{figure}

The obtained trend is plotted in Fig.~\ref{fig:e2eThroughputSIC} against the channel load $\rho$ for an erasure probability of $\varepsilon=0.2$, showing a 20\% and 66\% improvement in peak throughput compared to the performance achieved under the same conditions in the uplink (i.e., without SIC) and by a SA scheme with single receiver, respectively.
As discussed in Section~\ref{sec:uplinkThroughput}, a closed-form evaluation of the maximum throughput $\mathcal T_{A\&F}^*$ is not straightforward, due to the transcendental nature of the terms that define the metric. Nevertheless, in the two-receiver case, a good approximation is once again offered by evaluating $\mathcal T_{A\&F}$ at $\rho = 1/(1-\varepsilon)$, obtaining, after some calculations:
\begin{equation}
\mathcal T_{A\&F}^*(\varepsilon) \simeq  \frac{2}{e} - e^{-1-\varepsilon}\,(1-3\varepsilon+2\varepsilon^2),
\end{equation}
where a loss factor of $e^{-1-\varepsilon}(1-\varepsilon)(1-2\varepsilon) \leq e^{-1-\varepsilon}(1-\varepsilon)$ is exhibited with respect to the upper bound provided by twice the throughput of SA. The behavior of $\mathcal T_{A\&F}^*(\varepsilon)$ is reported in Fig.~\ref{fig:maxTruVsEpsilon}, where the dashed-gray curve indicates the actual peak throughput values computed numerically, whereas the white-squared markers report the proposed approximation. The plot highlights how, as opposed to what discussed for the non-SIC case, the introduction of joint processing modifies the shape of the curve, identifying an optimal (albeit not practical for many applications) erasure probability where the throughput is more than doubled over SA. On the other hand, it is remarkable to point out that the most relevant improvements over a non-SIC multi-receiver uplink are triggered exactly for values of $\varepsilon$ that may indeed be experienced in practical scenarios, boosting the peak throughput by up to 25\%.

As a concluding observation, notice how the two-relay A\&F solution that we discussed requires rather simple interference cancellation procedures compared to other advanced random access schemes \cite{DeGaudenzi07:CRDSA,Liva11:IRSA}, as only two observations need to be considered for joint decoding. The presented architecture, thus, represents an interesting tradeoff between complexity and performance gain, and triggers interest in more advanced scenarios where more receivers are available.

\section{}
\label{app:connection}
We derive that
$\sum_{\mc L \subseteq \mc S}  \left|\mc P_{\mc L}^n\right| = \left|\bigcup_{k\in \mc S} \mc A_k^n \backslash \bigcup_{k \in \overline{\mc S}} \mathcal A_k^n   \right|$.
We need the following lemma.
\begin{lemma}
 For a collection of sets $\mc B_1,\mc B_2,\ldots, \mc B_K$ and a subset $\mc S\subseteq \{1,\ldots,K\}$,
\begin{align}
 \bigcup_{k\in \mc S} \mc B_k = \bigcup_{\mc L \subseteq \mc S}\left( \bigcap_{l\in \mc L} \mc B_l \backslash \bigcup_{s \in \mc S\backslash \mc L} \mc B_s\right),
\end{align}
where the sets on the RHS do not intersect and thus form a partition of the LHS.
\end{lemma}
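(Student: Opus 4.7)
The plan is to prove both the set identity and the disjointness of the RHS pieces by assigning each point a ``signature'' that records which of the indexed sets contain it. For each $x$ in the ambient space, let $\mc L(x) := \{ k \in \mc S : x \in \mc B_k \}$. The key observation is that the RHS piece indexed by a particular $\mc L \subseteq \mc S$ is precisely $\{x : \mc L(x) = \mc L\}$, because being in $\bigcap_{l \in \mc L} \mc B_l$ forces $\mc L \subseteq \mc L(x)$ while being removed from $\bigcup_{s \in \mc S \setminus \mc L} \mc B_s$ forces $\mc L(x) \cap (\mc S \setminus \mc L) = \emptyset$, i.e.\ $\mc L(x) \subseteq \mc L$. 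Thus the RHS partitions the ambient space by signature, and the union collects exactly those $x$ with $\mc L(x) \neq \emptyset$.

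For the equality, I would then verify both inclusions directly. If $x \in \bigcup_{k \in \mc S} \mc B_k$, then $\mc L(x)$ is a nonempty subset of $\mc S$; by construction $x$ lies in $\bigcap_{l \in \mc L(x)} \mc B_l$ and avoids $\mc B_s$ for every $s \in \mc S \setminus \mc L(x)$, placing $x$ in the RHS piece indexed by $\mc L(x)$. Conversely, any $x$ in a piece indexed by a nonempty $\mc L$ lies in $\mc B_l$ for any $l \in \mc L$ and hence in the LHS. For disjointness, suppose $x$ sits in pieces indexed by distinct $\mc L_1, \mc L_2 \subseteq \mc S$; without loss of generality pick $l \in \mc L_1 \setminus \mc L_2 \subseteq \mc S \setminus \mc L_2$. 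Then the first membership forces $x \in \mc B_l$ while the second forces $x \notin \mc B_l$, a contradiction.

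The only point that needs a word of care is the index $\mc L = \emptyset$ on the RHS. Under the usual convention that an empty intersection equals the ambient universe, that piece equals the complement of the LHS and is extraneous. I would resolve this by reading the union as implicitly restricted to nonempty $\mc L$ (which matches how the lemma is used in Appendix~\ref{app:connection}, where $\mc P_{\mc L}^n$ is defined only for nonempty $\mc L$), or equivalently by taking the ambient universe to be $\bigcup_{k \in \mc S} \mc B_k$ so that the empty-signature piece is vacuously empty. Neither reading affects the downstream use, and beyond this bookkeeping point I do not anticipate a real obstacle: the entire lemma is just the observation that signatures partition the relevant points.
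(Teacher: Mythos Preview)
Your proof is correct and follows essentially the same approach as the paper: what you call the ``signature'' $\mc L(x)$ is precisely what the paper describes as ``the subset of largest cardinality such that $b \in \mc B_l,~\forall l \in \mc L$,'' and both arguments then use this to identify the unique RHS piece containing each element. Your explicit handling of the $\mc L = \emptyset$ edge case is a useful clarification that the paper's proof omits.
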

\begin{proof}
 We first show that any element $b\in  \bigcup_{k\in \mc S} \mc B_k$ is also included in the RHS:
 Pick an element $b\in  \bigcup_{k\in \mc S} \mc B_k$. Assume $\mc L$ is the subset of largest cardinality such that $b \in \mc B_l,~\forall l \in \mc L$. Clearly, $b \in \bigcap_{l\in\mc L}\mc B_l$ but $b \not \in \bigcup_{s\in \mc S\backslash \mc L} \mc B_s$. It follows that $b \in \bigcap_{l\in\mc L}\mc B_l \backslash \bigcup_{s\in \mc S\backslash \mc L} \mc B_s$. This is true for some subset $\mc L \subseteq \mc S$.
Second, we show that this subset is unique.
Let again $\mc L$ be the subset of largest cardinality such that $b \in \mc B_l,~\forall l \in \mc L$ and choose a different subset $\mc V\subseteq \mc S$, $\mc V \not = \mc L$. Then, either $b \not \in \bigcap_{l \in \mc L} B_l$ or $b\in \bigcup_{s \in \mc S \backslash \mc L} \mc B_s$. The element $b$ is thus only included in $\bigcap_{l\in\mc L}\mc B_l \backslash \bigcup_{s\in \mc S\backslash \mc L} \mc B_s$.
\end{proof}
By choosing $\mc B_k = \mc A_k^n \backslash \bigcup_{k \in \bar{\mc S}} \mc A_k^n$, the result follows by elementary set operations.
%\begin{align}
% \bigcup_{k\in \mc S} \left( \mc A_k^n \backslash \bigcup_{k \in \bar{\mc S}} \mc A_k^n \right) = \nonumber \\
%=\bigcup_{\mc L \subseteq \mc S}\left( \bigcap_{l \in \mc L} \left( \mc A_l^n \backslash \bigcup_{k \in \bar{\mc S}} \mc A_k^n \right)  \backslash \bigcup_{s \in \mc S \backslash \mc L} \left( \mc A_s^n \backslash \bigcup_{k \in \bar{\mc S}} \mc A_k^n   \right)        \right) \nonumber \\
%= \bigcup_{\mc L \subseteq \mc S} \left( \bigcap_{l \in \mc L}\mc A_l^n \backslash \bigcup_{k \in \bar{\mc L}} \mc A_k^n \right).
%\end{align}

\section*{Acknowledgement}
The authors would like to thank Prof. Gerhard Kramer for the fruitful and insightful discussions throughout the development of the present work.
\bibliography{IEEEabrv,aloha}

% Generated by IEEEtran.bst, version: 1.12 (2007/01/11)
\begin{thebibliography}{10}
\providecommand{\url}[1]{#1}
\csname url@samestyle\endcsname
\providecommand{\newblock}{\relax}
\providecommand{\bibinfo}[2]{#2}
\providecommand{\BIBentrySTDinterwordspacing}{\spaceskip=0pt\relax}
\providecommand{\BIBentryALTinterwordstretchfactor}{4}
\providecommand{\BIBentryALTinterwordspacing}{\spaceskip=\fontdimen2\font plus
\BIBentryALTinterwordstretchfactor\fontdimen3\font minus
  \fontdimen4\font\relax}
\providecommand{\BIBforeignlanguage}[2]{{%
\expandafter\ifx\csname l@#1\endcsname\relax
\typeout{** WARNING: IEEEtran.bst: No hyphenation pattern has been}%
\typeout{** loaded for the language `#1'. Using the pattern for}%
\typeout{** the default language instead.}%
\else
\language=\csname l@#1\endcsname
\fi
#2}}
\providecommand{\BIBdecl}{\relax}
\BIBdecl

\bibitem{DeGaudenzi07:CRDSA}
E.~Casini, R.~D. Gaudenzi, and O.~del Rio~Herrero, ``{Contention Resolution
  Diversity Slotted ALOHA (CRDSA): An Enhanced Random Access Scheme for
  Satellite Access Packet Networks.}'' \emph{{IEEE} Trans. Wireless Commun.},
  vol.~6, pp. 1408--1419, Apr. 2007.

\bibitem{Giannakis07:SICTA}
Y.~Yu and G.~B. Giannakis, ``{High-Throughput Random Access Using Successive
  Interference Cancellation in a Tree Algorithm},'' \emph{{IEEE} Trans. Inf.
  Theory}, vol.~53, no.~12, pp. 4628--4639, Dec. 2007.

\bibitem{Gollakota2008:ZigZag}
S.~Gollakota and D.~Katabi, ``{Zigzag Decoding: Combating Hidden Terminals in
  Wireless Networks},'' in \emph{Proceedings of the ACM SIGCOMM 2008 Conference
  on Data communication}, ser. SIGCOMM '08, 2008, pp. 159--170.

\bibitem{Liva11:IRSA}
G.~Liva, ``{Graph-Based Analysis and Optimization of Contention Resolution
  Diversity Slotted {ALOHA}},'' \emph{{IEEE} Trans. Commun.}, vol.~59, no.~2,
  pp. 477--487, Feb. 2011.

\bibitem{Paolini11:CSA_ICC}
E.~Paolini, G.~Liva, and M.~Chiani, ``{High Throughput Random Access via Codes
  on Graphs: {C}oded Slotted {ALOHA}},'' in \emph{Proc. 2011 IEEE Int. Conf.
  Commun.}, Kyoto, Japan, Jun. 2011.

\bibitem{Kissling11:CRA}
C.~Kissling, ``{Performance Enhancements for Asynchronous Random Access
  Protocols over Satellite},'' in \emph{Proc. 2011 IEEE Int. Conf. Commun.},
  Kyoto, Japan, Jun. 2011.

\bibitem{Dimakis11:SigSag}
A.~Tehrani, A.~Dimakis, and M.~Neely, ``{SigSag: Iterative Detection Through
  Soft Message-Passing},'' vol.~5, no.~8, pp. 1512 --1523, Dec. 2011.

\bibitem{Liva2012:ISIT_ConvIRSA}
G.~Liva, E.~Paolini, M.~Lentmaier, and M.~Chiani, ``{Spatially-Coupled Random
  Access on Graphs},'' in \emph{Proc. IEEE Int. Symp. on Information Theory},
  Cambridge, MA, USA, Jul. 2012.

\bibitem{Popovski2012:Aloha}
C.~Stefanovic, P.~Popovski, and D.~Vukobratovic, ``{Frameless ALOHA Protocol
  for Wireless Networks},'' \emph{{IEEE} Commun. Lett.}, vol.~16, no.~12, pp.
  2087--2090, 2012.

\bibitem{Pfister2012:IRSA}
K.~Narayanan and H.~D. Pfister, ``{Iterative Collision Resolution for Slotted
  ALOHA: An Optimal Uncoordinated Transmission Policy},'' in \emph{Proc. 2012
  7th International Symposium on Turbo Codes and Iterative Information
  Processing (ISTC)}, Gothenburg, Sweden, Aug. 2012.

\bibitem{Rappaport83:DSA}
G.~L. Choudhury and S.~S. Rappaport, ``{Diversity ALOHA - A Random Access
  Scheme for Satellite Communications},'' \emph{{IEEE} Trans. Commun.},
  vol.~31, pp. 450--457, 1983.

\bibitem{Paolini11:CSA_Globecom}
E.~Paolini, G.~Liva, and M.~Chiani, ``{Graph-Based Random Access for the
  Collision Channel without Feed-Back: {C}apacity Bound},'' in \emph{Proc. 2011
  IEEE Global Telecommun. Conf.}, Houston, Texas, Dec. 2011.

\bibitem{Zorzi98:ALOHA}
M.~Zorzi, ``{Mobile Radio Slotted ALOHA with Capture, Diversity and
  Retransmission Control in the Presence of Shadowing},'' \emph{Wireless
  Networks}, vol.~4, pp. 379 --388, Aug 1998.

\bibitem{OnOff2003}
E.~Perron, M.~Rezaeian, and A.~Grant, ``{The On-Off Fading Channel},'' in
  \emph{Proc. IEEE Int. Symp. on Information Theory}, Yokohama, Japan, Jul.
  2003, p. 244.

\bibitem{Abramson:ALOHA}
N.~Abramson, ``{{The ALOHA System - Another Alternative for Computer
  Communications}},'' in \emph{Proc. 1970 Fall Joint Computer Conference},
  vol.~37.\hskip 1em plus 0.5em minus 0.4em\relax AFIPS Press, 1970, pp.
  281--285.

\bibitem{slomson1991introduction}
A.~B. Slomson, \emph{{An Introduction to Combinatorics}}.\hskip 1em plus 0.5em
  minus 0.4em\relax Chapman and Hall, 1991.

\bibitem{slepian1973noiseless}
D.~Slepian and J.~Wolf, ``{Noiseless Coding of Correlated Information
  Sources},'' \emph{Information Theory, IEEE Transactions on}, vol.~19, no.~4,
  pp. 471--480, 1973.

\bibitem{dana2006capacity}
A.~F. Dana, R.~Gowaikar, R.~Palanki, B.~Hassibi, and M.~Effros, ``{Capacity of
  Wireless Erasure Networks},'' \emph{Information Theory, IEEE Transactions
  on}, vol.~52, no.~3, pp. 789--804, 2006.

\bibitem{ho2006random}
T.~Ho, M.~M{\'e}dard, R.~Koetter, D.~R. Karger, M.~Effros, J.~Shi, and
  B.~Leong, ``{A Random Linear Network Coding Approach to Multicast},''
  \emph{Information Theory, IEEE Transactions on}, vol.~52, no.~10, pp.
  4413--4430, 2006.

\end{thebibliography}

\end{document}